\newcommand\ket[1]{\ensuremath{|#1\rangle}}
\newcommand\bra[1]{\ensuremath{\langle#1|}}
\newcommand{\ketbra}[2]{| #1 \rangle\langle #2 |}
\newcommand\tr{\mathop{\rm tr}\nolimits}
\def\cS{{\cal S}}
\def\cT{{\cal T}}
\def\cW{{\cal W}}
\def\w{\mathbf{w}}
\newcommand{\bes}{\begin{eqnarray*}}
    \newcommand{\ees}{\end{eqnarray*}} 
\newcommand{\bpm}{\begin{pmatrix}}
    \newcommand{\epm}{\end{pmatrix}}
\def\IR{{\mathbb R}}
\def\cS{{\mathcal S}}
\def\cI{{\mathcal I}}
\def\cD{{\mathcal D}}
\def\cV{{\mathcal V}}
\def\diag{{\rm diag}\,}
\def\tr{{\rm tr}\,}
\newcommand{\defeq}{\stackrel{\smash{\textnormal{\tiny def}}}{=}}
\newcommand{\knorm}[2]{\lVert#1\rVert_{(#2)}}
\newcommand{\knormd}[2]{\lVert#1\rVert_{(#2)}^\circ}
\newtheorem{lemma}{Lemma}
\newtheorem{theorem}{Theorem}
\begin{document}

\title{Some notes on the robustness of $\mathbf{k}$-coherence and $\mathbf{k}$-entanglement}
\author{Nathaniel Johnston}
\affiliation{Department of Mathematics and Computer Science, Mount Allison University, Sackville, NB, Canada E4L 1E4}
\affiliation{Department of Mathematics and Statistics, University of Guelph, Guelph, ON, Canada N1G 2W1}
\author{Chi-Kwong Li}
\affiliation{Department of Mathematics, College of William and Mary,  Williamsburg, VA, USA  23187}
\author{Sarah Plosker}
\affiliation{Department of Mathematics \& Computer Science, Brandon University, Brandon,
    MB, Canada R7A 6A9}
    \affiliation{Department of Mathematics and Statistics, University of Guelph, Guelph, ON, Canada N1G 2W1}
\author{Yiu-Tung Poon}
\affiliation{Department of Mathematics, Iowa State University, Ames, IA, USA  50011}
\author{Bartosz Regula}
\affiliation{School of Mathematical Sciences, University of Nottingham, University Park, Nottingham NG7 2RD, United Kingdom}

\begin{abstract}
 We show that two related measures of $k$-coherence, called the standard and generalized robustness of $k$-coherence, are equal to each other when restricted to pure states. As a direct application of the result, we establish an equivalence between two analogous measures of Schmidt rank $k$-entanglement for all pure states. This answers conjectures raised in the literature regarding the evaluation of the quantifiers, and facilitates an efficient quantification of pure-state resources by introducing computable closed-form expressions for the two measures.
\end{abstract}

\date{\today}

\pacs{03.67.Ac, 03.65.Ta, 
02.10.Ud, 
03.67.Mn}

\maketitle

\section{I. Introduction}
The degree to which a  quantum state (density matrix)
is in superposition with respect to a given set of mutually orthogonal states of the Hilbert space representing a  quantum system is called the \emph{coherence} of the state. Coherence has long since been established as a resource in quantum optics \cite{Glau63, Su63} that can be generated and detected. A general resource theory for coherence in quantum information theory has since been developed mirroring that of entanglement \cite{Abe06,BCP14,WY}, and for any measure of entanglement, one can analogously define a measure of coherence. Desirable properties have been identified for characterizing proper measures of coherence---they should equal zero precisely when the state is diagonal in the reference basis (such diagonal states are called \emph{incoherent}), they should be monotonic under incoherent quantum channels and under selective measurements on average, and they should be non-increasing under mixing of quantum states \cite{BCP14}.

A generalization of coherence, called \emph{$k$-coherence} (appearing in the literature under various guises: quantification strength of the quantumness of the state \cite{SV}, multilevel nonclassicality \cite{KSP, RPCBSA}, superposition rank \cite{TKEP}, and coherence number \cite{Chin1, Chin2}), provides a hierarchical structure for categorizing coherence. A state is $1$-incoherent if and only if it is incoherent (i.e., diagonal in the reference basis), all states are $n$-incoherent (where $n$ is the dimension of the Hilbert space the states act on), and for $1 < k < n$, a state being $k$-incoherent means that it is a convex combination of block diagonal matrices with blocks of size no larger than $k \times k$ (which intuitively corresponds to the state being ``more incoherent'' the smaller $k$ is).

Numerous proper measures of coherence have been identified and studied recently, such as the $\ell_1$-norm of coherence, the relative entropy of coherence \cite{BCP14}, and the robustness of coherence \cite{NBCPJA16}. These measures have also been generalized to $k$-coherence; see \cite{Reg17,RBCLAWFP17}. Here, we are interested in two separate generalizations of the robustness of coherence, which are called the ``standard'' and ``generalized'' robustnesses of $k$-coherence. Our main contribution is to show that the two measures agree with each other when restricted to pure-state inputs, which we do by deriving an explicit closed expression for the standard robustness that agrees with the formula for the generalized robustness obtained in~\cite{Reg17}. The equality between the robustnesses is not a priori obvious, and indeed in other resource theories such as the resource theory of magic states, there is a strict inequality between two analogous robustness measures \cite{Reg17}. This demonstrates a rather curious property of $k$-coherence, where the quantification of several measures of this resource reduces to a single quantity for all pure states.

Similar work concerning measures of the entanglement of a quantum state, rather than its coherence, has previously been carried out \cite{Rud01,Rud05,VT99}, and a similar conjecture about the equivalence of two measures of Schmidt rank $k$-entanglement on pure states was made in \cite{Nconj}. As an application of our main result, we also prove this conjecture---we show that these two different robustnesses of Schmidt rank $k$-entanglement agree with each other when restricted to pure states, establishing a computable formula which allows for the quantification of pure-state $k$-entanglement and significantly improves on bounds for the robustnesses known in the literature previously \cite{Clarisse}.

The paper is organized as follows. In Section~II, we introduce the mathematical preliminaries required to discuss these coherence measures properly. In Section~III, we present our main result: that the standard robustness of $k$-coherence  and the generalized robustness  of $k$-coherence are equal for all pure states. Our methodology involves the use of semidefinite programming, with the bulk of the ``heavy lifting'' done via a separating hyperplane argument. Much like in \cite{CGJLP, JLP}, the formula that we find ``branches'' into one of $n$ different formulas, depending on how close the entries of the pure state vector are to each other (though the techniques used here are different than in those papers). Section~IV is dedicated to  discussions of how to explicitly construct the closest $k$-incoherent state in certain special cases, since the proof of our main result is non-constructive. In Section~V, we detail the connection between robustness of $k$-coherence with robustness of $k$-entanglement and show why our results answer the corresponding question about Schmidt rank $k$-entanglement. We end with conclusions in Section~VI.

\section{II. Preliminaries}

We use kets like $\ket{v}$ to denote unit vectors (pure states) in $\mathbb{C}^n$, lowercase Greek letters like $\rho$ to denote arbitrary density matrices (positive semidefinite matrices with trace $1$), outer products like $\ketbra{v}{v}$ to specifically denote pure state density matrices, and $\cD_n$ to denote the set of $n \times n$ density matrices (pure or mixed). We will use the notation  $\mathbf{x}$ to denote an unnormalized complex vector,   $\mathbf{1} = (1,\ldots,1)^t$ the all-ones vector,  and $A\succeq 0$ to mean that the matrix $A$ is positive semidefinite. We say that a pure state $\ket{v}$ is $k$-incoherent if it has $k$ or fewer non-zero entries (when written in the standard computational basis $\{\ket{1}, \dots, \ket{n}\}$ of $\mathbb{C}^n$), and we say that a density matrix is $k$-incoherent if it is in the set
\begin{align*}
    \cI_k\defeq \Big\{\sum_i p_i\ketbra{v_i}{v_i}\,: & \ p_i\geq 0, \sum_i p_i=1, \\
    & \ \ket{v_i} \textnormal{ is $k$-incoherent } \forall i \ \Big\}.
\end{align*}
For convenience, we define $\cI = \cI_1$, which is equal to the set of diagonal density matrices, and we also notice that $\cI_n = \cD_n$ (the set of \emph{all} density matrices). For intermediate values of $k$, it is not difficult to show that these sets are convex and satisfy $\cI_k\subsetneq \cI_{k+1}$, so they form a hierarchy that interpolates between $\cI$ and $\cD_n$ in a fairly natural way.

Analogous to the setting of entanglement \cite{VT99}, the robustness of coherence \cite{NBCPJA16} of a given state $\rho\in \cD_n$ is defined by 
\begin{eqnarray}\label{eq:robust}
    R(\rho) & \defeq & \min_{\tau \in \cD_n} \left\{ s \geq 0 \, \Big| \, \frac{\rho + s\tau}{1 + s} \in \mathcal{I}\right\}.
\end{eqnarray}
More generally, for $k \in \{2, 3, \dots, n\}$ one can define two different robustnesses of $k$-coherence---the ``standard'' and ``generalized'' robustness of $k$-coherence, respectively---as follows \cite{RBCLAWFP17,Reg17}:
\begin{eqnarray}
    R_k^s(\rho)& \defeq &\min_{\sigma\in \cI_k}\left\{s\geq 0\,:\,\frac{\rho+s\sigma}{1+s}\in \cI_k\right\} \quad \text{and}\label{eq:standard_k_coh}\\
    R_k^g(\rho)& \defeq &\min_{\tau\in \cD_n}\left\{s\geq 0\,:\, \frac{\rho+s\tau}{1+s}\in \cI_k\right\}.\label{eq:gen_k_coh}
\end{eqnarray}
We note that the measure~\eqref{eq:standard_k_coh} only makes sense if $k \geq 2$, since if $k = 1$ then $\cI_1$ does not span the space of density matrices, so if $\rho$ is not diagonal then we cannot find \emph{any} value of $s$ such that $(\rho + s\sigma)/(1+s) \in \cI_1$. On the other hand, the measure~\eqref{eq:gen_k_coh} indeed reduces to exactly the usual robustness of coherence~\eqref{eq:robust} when $k = 1$.

Since $\cI_k\subsetneq\cD_n$ for fixed $n$ and $k=1,\dots, n-1$, we have $R_k^g\leq R_k^s$, and numerics can be used to straightforwardly show that in fact $R_k^g(\rho) < R_k^s(\rho)$ for most randomly-chosen $\rho \in \cD_n$ (both of these measures of $k$-coherence can be computed numerically via semidefinite programming). However, we will show that these two $k$-coherence measures do in fact coincide when restricted to pure states. That is, $R_k^g(\ketbra{v}{v}) = R_k^s(\ketbra{v}{v})$ for all $\ket{v} \in \mathbb{C}^n$.

\section{III. Main result}

We are now ready to state our main result: a formula for $R_k^s(\ketbra{v}{v})$ that agrees with the formula for $R_k^g(\ketbra{v}{v})$ that was derived in \cite{Reg17} and thus shows that these two measures of coherence do indeed coincide on pure states. Note that the theorem is stated only for pure states $\ket{v}=(v_1, \dots, v_n)^t$ with real entries satisfying $v_1 \ge \cdots \ge v_n\ge 0$. This is not actually a restriction, since if $\ket{v}$ is not of this form then it can be converted to this form via a diagonal unitary and/or a permutation matrix, and these operations do not affect the value of $R_k^s$ or $R_k^g$.

For convenience, for each $1 \leq j \leq n$ we define $s_j := \sum_{i=j}^n v_i$. Then our main result is as follows.

\begin{theorem}\label{thm:main}
    Let $\ket{v}=(v_1, \dots, v_n)^t$ be a pure state with $v_1 \geq v_2 \geq \cdots \geq v_n \geq 0$. Fix $k \in \{2, 3, \dots, n\}$ and let $\ell \in \{2,3,\ldots,k\}$ be the largest integer such that $v_{\ell-1} \geq s_\ell/(k-\ell+1)$ (set $\ell = 1$ if no such integer exists). Then
    \begin{align*}
        R_k^s(\ketbra{v}{v}) = R_k^g(\ketbra{v}{v}) = \frac{s_\ell^2}{k-\ell+1} - \sum_{i=\ell}^{n}v_i^2.
    \end{align*}
\end{theorem}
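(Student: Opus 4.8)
The plan is to prove the two equalities by sandwiching both quantities between the claimed value from below and from above. Write $F$ for the right-hand side $\frac{s_\ell^2}{k-\ell+1}-\sum_{i=\ell}^{n}v_i^2$. Since $\cI_k\subsetneq\cD_n$ gives $R_k^g\le R_k^s$ for free (as already noted), it suffices to establish $R_k^g(\ketbra v v)\ge F$ and $R_k^s(\ketbra v v)\le F$; chaining these as $F\le R_k^g\le R_k^s\le F$ forces both to equal $F$. I would phrase both measures as conic (semidefinite) programs over the cone $\mathcal{C}_k$ generated by $\cI_k$: the generalized robustness is $\min\{\tr Z:Z\succeq 0,\ \ketbra v v+Z\in\mathcal{C}_k\}$, while the standard robustness is the same program with the stronger requirement $Z\in\mathcal{C}_k$. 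Dualizing, each becomes a maximization of $\bra v W\ket v-1$ over Hermitian witnesses $W$ whose every $k\times k$ principal submatrix has largest eigenvalue at most $1$, where for $R_k^g$ one additionally imposes $W\succeq 0$ and for $R_k^s$ only that each $k\times k$ principal submatrix is positive semidefinite (membership in the dual cone $\mathcal{C}_k^{*}$).

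For the lower bound $R_k^g\ge F$, I would exhibit an explicit rank-one witness $W=\ketbra x x$, where $\ket x$ has its first $\ell-1$ entries proportional to $v_1,\dots,v_{\ell-1}$ and its last $n-\ell+1$ entries all equal to a flat value proportional to $s_\ell/(k-\ell+1)$, normalized so that the sum of its $k$ largest squared entries equals $1$. For a rank-one $W$ the constraint ``largest eigenvalue of each $k\times k$ principal submatrix $\le 1$'' is exactly this normalization, and the inequality $v_{\ell-1}\ge s_\ell/(k-\ell+1)$ defining $\ell$---together with the maximality of $\ell$---is precisely what guarantees that the $k$ largest squared entries are the $\ell-1$ head entries followed by $k-\ell+1$ tail entries, i.e.\ that this is the correct ``water-filling'' level. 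A short computation then gives $\bra v W\ket v-1=F$. (This recovers the closed form obtained in~\cite{Reg17}.)

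The upper bound $R_k^s\le F$ is the hard part, and I would prove it non-constructively. Because the set of feasible noise levels is upward closed, $R_k^s(\ketbra v v)\le F$ is equivalent to feasibility at $s=F$, i.e.\ to the membership $\ketbra v v\in(1+F)\cI_k-F\cI_k$, i.e.\ to the existence of $\sigma,\omega\in\cI_k$ with $\ketbra v v+F\sigma=(1+F)\omega$. Assuming this fails, the set $(1+F)\cI_k-F\cI_k$ is convex and compact and does not contain $\ketbra v v$, so a separating hyperplane yields a Hermitian $W$ with $\bra v W\ket v$ strictly larger than $(1+F)\lambda^+-F\lambda^-$, where $\lambda^{\pm}$ are the largest and smallest values of $\bra w W\ket w$ over $k$-incoherent unit vectors $\ket w$. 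This inequality is invariant under $W\mapsto W+cI$ and under positive scaling, so after shifting to make $\lambda^-=0$ and rescaling to make $\lambda^+=1$ I obtain a witness $W\in\mathcal{C}_k^{*}$ with every $k\times k$ principal submatrix $\preceq I$ and $\bra v W\ket v>1+F$.

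The crux is then to derive a contradiction from the existence of such a $W$. Since $W$ need not be positive semidefinite, it is not immediately a feasible witness for the generalized problem; the main obstacle is to show that, for a pure state with sorted nonnegative entries, $W$ can be replaced by a genuinely positive semidefinite witness $W'$ satisfying the same $k\times k$-submatrix constraints and with $\bra v W'\ket v\ge\bra v W\ket v$. Such a $W'$ would be feasible for the generalized dual and give $R_k^g>F$, contradicting the lower bound above; equivalently, one shows that enlarging the witness cone from $\{W\succeq 0\}$ to $\mathcal{C}_k^{*}$ does not raise the optimal value $1+F$ on pure states. Bridging this gap between positive semidefiniteness and mere positivity of all $k\times k$ principal submatrices is where the ordered structure of $\ket v$ and the parameter $\ell$ do the real work: the argument naturally splits according to how many leading entries of $\ket v$ strictly dominate the flattened tail value, which is exactly what produces the $n$ branches of the formula. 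Because the argument only certifies the existence of the optimal $k$-incoherent noise $\sigma$ rather than building it, the proof is non-constructive, which is why explicit constructions (Section~IV) are available only in special cases.
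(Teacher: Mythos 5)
Your lower bound is essentially the paper's: the rank-one witness with head entries proportional to $v_1,\dots,v_{\ell-1}$ and a flat tail at level $s_\ell/(k-\ell+1)$, normalized so that the sum of the $k$ largest squared entries is $1$, is exactly the matrix $\mathbf{a}\mathbf{a}^t$ of Eq.~\eqref{eq:feasible_a}, and that part is fine. The upper bound, however, has a genuine gap, and it sits exactly where you yourself place ``the crux.'' Writing $F$ for the claimed value, your separating-hyperplane reduction correctly produces a witness $W\in\cI_k^\circ$ with $I-W\in\cI_k^\circ$ and $\bra{v}W\ket{v}>1+F$, but you then assert without argument that $W$ can be replaced by a positive semidefinite $W'$ obeying the same $k\times k$-submatrix constraints with $\bra{v}W'\ket{v}\ge\bra{v}W\ket{v}$. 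By strong duality this replacement claim \emph{is} the theorem: it says precisely that relaxing the dual constraint from $W\succeq 0$ to $W\in\cI_k^\circ$ does not increase the optimal value on pure states, i.e.\ $R_k^s(\ketbra{v}{v})=R_k^g(\ketbra{v}{v})$. So the proposal reduces the statement to an equivalent dual restatement rather than proving it. There is also a logical slip in how you close: a PSD witness with value $>1+F$ gives $R_k^g>F$, which does not contradict your lower bound $R_k^g\ge F$; to get a contradiction you would need the matching upper bound $R_k^g\le F$, which is the content of \cite[Theorem~10]{Reg17} and is not delivered by your witness construction.

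The paper escapes this circularity by separating much more structured sets on the primal side. It posits that the optimal noise can be taken of the form $O_{\ell-1}\oplus S$ with $S$ symmetric, having non-negative diagonal, non-positive off-diagonal entries and zero row sums (such $S$ lie in the cone generated by $\cI_2\subset\cI_k$ by Lemma~\ref{lem:inclusion_in_i2}), and that $\ketbra{v}{v}+s\sigma$ can be taken in the convex hull $\cT_k$ of the specific matrices $\x\x^t$ whose $k$ nonzero entries all equal $s_1/k$ (after reducing the general case to $\ell=1$ on the tail). Separating \emph{these} sets yields a hyperplane $H$ constrained only by $h_{i,i}+h_{j,j}\ge h_{i,j}+h_{j,i}$, and the contradiction is then obtained by a concrete optimization argument (Lemmas~\ref{thm-1} and~\ref{thm2}): the maximum of $\w^tH\w$ over the polytope $\{w_i\in[0,1/k],\ \sum_i w_i=1\}$ is pushed to a point with exactly $k$ entries equal to $1/k$ by perturbing along directions $\ket{r}-\ket{s}$. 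None of that combinatorial work, nor the case split $\ell=k$, $\ell=1$, $1<\ell<k$, appears in your proposal; to complete your route you would have to supply an actual proof of the $W\mapsto W'$ upgrade, which is the entire difficulty.
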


Before we proceed to prove the above result, let us provide some intuition for the formula appearing in Theorem \ref{thm:main}. This expression can be related to the   $k$-support norm $\knorm{\ket{v}}{k}$, defined first in \cite{Ksupport} and shown to correspond to $R_k^g$ for pure states in \cite[Theorem~10]{Reg17}, in the sense that $R_k^g(\ketbra{v}{v}) = \knorm{\ket{v}}{k}^2-1$. As a valid norm on $\mathbb{C}^n$, the $k$-support norm can be alternatively computed using its dual norm, given for any $\mathbf{x} \in \mathbb{C}^n$ by \cite{Ksupport,RPCBSA}
\begin{equation}\begin{aligned}
    \knormd{\mathbf{x}}{k} &=  \max \left\{ \left|\mathbf{x}^\dagger \ket{v}\right| \,:\, \ket{v} \textnormal{ is $k$-incoherent } \right\}\\
    &= \sqrt{\sum_{i=1}^{k} x^\downarrow_i}
\end{aligned}\end{equation}
where $x^\downarrow_i$ denotes the coefficients of $\mathbf{x}$ arranged so that $|x^\downarrow_i| \geq \cdots \geq |x^\downarrow_n|$. By norm duality, we then have
\begin{equation}\begin{aligned}\label{eq:ksupp_dual}
    \knorm{\mathbf{x}}{k} = \max \left\{ \left|\mathbf{x}^\dagger \mathbf{a}\right| \,:\, \knormd{\mathbf{a}}{k} \leq 1 \right\}.
\end{aligned}\end{equation}
We further remark that $\knorm{\mathbf{x}}{1} = \sum_i |x_i|$ and $\knorm{\mathbf{x}}{n} = \sqrt{\mathbf{x}^\dagger\mathbf{x}}$, and hence the $k$-support norm can be seen as a natural way to interpolate between the $\ell_1$ and $\ell_2$ norms.

The remainder of this section is devoted to proving the above result. Although the lower bound follows already from \cite{Reg17}, we rederive it directly here, as it is no extra work to do so. We will separately show that the claimed formula is a lower bound and an upper bound of the robustnesses.

\subsection{Lower bound via semidefinite programming duality}

To show that the formula described by Theorem~\ref{thm:main} is a lower bound of the robustnesses of $k$-coherence, we use semidefinite programming duality techniques. The minimization problems~\eqref{eq:standard_k_coh} and~\eqref{eq:gen_k_coh} that define $R_k^s$ and $R_k^g$ are (primal) semidefinite programs, and their dual programs can be written in the following forms:
\begin{eqnarray}
    R_k^s(\rho) & = & \max_{W\in \cI_k^o}\big\{\tr(\rho W)\,:\,I-W\in \cI_k^{\circ}\big\}-1\label{eq:rob_standard_dual} \\
    R_k^g(\rho) & = & \max_{W \succeq 0}\big\{\tr(\rho W)\,:\, I-W\in \cI_k^{\circ}\big\}-1,\label{eq:rob_gen_dual}
\end{eqnarray}
where $\cI_k^{\circ}$ is the dual cone of $\cI_k$, defined by \cite{RBCLAWFP17}
\begin{align*}
    \cI_k^{\circ} & \defeq \{W = W^\dagger\,:\,\tr(W\rho)\geq 0 \ \ \forall \, \rho\in \cI_k\} \\
    & = \{W = W^\dagger\,:\,\textnormal{all } k\times k \textnormal{ principal submatrices } \\
    & \qquad \qquad \qquad \quad  W[i_1, \dots, i_k]\textnormal{ of } W \textnormal{ are }\succeq 0\},
\end{align*}
where $W[i_1, \dots, i_k]$ denotes the principal submatrix of $W$ containing rows and columns $i_1,\ldots,i_k$.
Note that since $R_k^g\leq R_k^s$, we just want to find a particular $W$ that attains the optimal value claimed by the theorem in the maximization problem~\eqref{eq:rob_gen_dual}, since that establishes a lower bound of $R_k^g$ and thus of $R_k^s$ as well.

To establish this formula as a lower bound, we first define (for convenience of notation) the quantities
\begin{align*}
    \alpha := \frac{s_\ell}{k-\ell+1} \quad \text{and} \quad \beta := \sqrt{\alpha s_\ell + \sum_{j=1}^{\ell-1}v_j^2}.
\end{align*}
In particular, this definition gives $v_{\ell-1} \ge \alpha$ and
\begin{align*}
    \beta^2 - 1 = \alpha s_\ell + \sum_{j=1}^{\ell-1}v_j^2 - 1 = \frac{s_\ell^2}{k-\ell+1} - \sum_{i=\ell}^{n}v_i^2,
\end{align*}
which is our claimed formula for $R_k^s(\ketbra{v}{v})$ and $R_k^g(\ketbra{v}{v})$.

In the case when $\ell \geq 2$, we set
\begin{equation}\label{eq:feasible_a}
    \mathbf{a} := \frac{1}{\beta}\big(v_1,v_2,\ldots,v_{\ell-1},\alpha,\alpha,\ldots,\alpha\big)^t \in \mathbb{R}^n
\end{equation}
and then define
\[
    W := \mathbf{a}\mathbf{a}^t.
\]
We claim that $W$ is a feasible point of the semidefinite program~\eqref{eq:rob_gen_dual} that produces the desired objective value. To see this, we first note that $W \succeq 0$ and
\begin{align*}
    \tr(\ketbra{v}{v}W) - 1 & = \frac{1}{\beta^2}\left(\sum_{j=1}^{\ell-1} v_j^2 + \sum_{j=\ell}^n v_j\alpha\right)^2 - 1 \\
    & = \frac{1}{\beta^2}\left(\alpha s_\ell + \sum_{j=1}^{\ell-1} v_j^2\right)^2 - 1 \\
    & = \beta^2 - 1,
\end{align*}
as desired.

All that remains is to show that $I - W \in \cI_k^{\circ}$, which is equivalent to the statement that $I_k \succeq W[i_1, \dots, i_k]$ for all $i_1 < \cdots < i_k$. To see why this is the case, note that $W[i_1, \dots, i_k]$ has rank one and its maximum eigenvalue is thus $\tr(W[i_1, \dots, i_k])$. Because $v_1 \ge \cdots \ge v_{\ell-1} \ge \alpha$, the $k \times k$ principal submatrix of $W$ with the largest trace is $W[1, \dots, k]$, which has trace
\begin{align*}
    \tr(W[1, \dots, k]) & = \frac{1}{\beta^2}\left(\sum_{j=1}^{\ell-1}v_j^2 + (k-\ell+1) \alpha^2\right) \\
    & = \frac{1}{\beta^2}\left(\sum_{j=1}^{\ell-1}v_j^2 + \alpha s_\ell\right) \\
    & = 1.
\end{align*}
Thus it is indeed the case that $I_k \succeq W[i_1, \dots, i_k]$, which shows that $W$ is a feasible point of the semidefinite program and completes the proof of the lower bound when $\ell \geq 2$.

To see this the desired lower bound also holds when $\ell = 1$, we set $W = J_n/k$, where $J_n$ is the $n \times n$ all-ones matrix. It is straightforward to check that $W \succeq 0$ and that every $k \times k$ principal submatrix of $W$ has largest eigenvalue $1$, so $I - W \in \cI_k^{\circ}$. Thus $W$ is a feasible point of the semidefinite program~\eqref{eq:rob_gen_dual} with objective value
\begin{align*}
    \tr(\ketbra{v}{v}W) - 1 & = \frac{1}{k}\left(\sum_{j=1}^{n} v_j\right)^2 - 1 = \beta^2 - 1,
\end{align*}
as desired, which completes the proof of the $\ell = 1$ part of the lower bound.

We have now shown that the quantity described by the theorem is indeed a lower bound on both robustnesses:
\begin{align*}
    \frac{s_\ell^2}{k-\ell+1} - \sum_{i=\ell}^{n}v_i^2 \leq R_k^g(\ketbra{v}{v}) \leq R_k^s(\ketbra{v}{v}).
\end{align*}

Notice that $\mathbf{a}$ in Eq.~\eqref{eq:feasible_a} is precisely the vector which achieves the maximum in the dual expression for the $k$-support norm $\knorm{\ket{v}}{k}$ in Eq.~\eqref{eq:ksupp_dual}.

\subsection{The upper bound}

To see that this quantity is also an upper bound, we return to the original formulation of these robustnesses~\eqref{eq:standard_k_coh} and~\eqref{eq:gen_k_coh} as minimization problems. We want to show that there exists $\sigma \in \cI_k$ such that
\[
    \frac{\ketbra{v}{v} + s\sigma}{1 + s} \in \cI_k, \quad \text{where} \quad s = \beta^2 - 1,
\]
since that would then imply $R_k^g(\ketbra{v}{v}) \leq R_k^s(\ketbra{v}{v}) \leq \beta^2 - 1$.

First, we need the following simple lemma.

\begin{lemma}\label{lem:inclusion_in_i2}
    Let $S \in M_n$ be a Hermitian matrix with non-negative diagonal entries, non-positive off-diagonal entries, and non-negative row sums. Then $S$ is a non-negative sum of matrices in $\cI_2$.
\end{lemma}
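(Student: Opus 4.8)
The plan is to give an explicit constructive decomposition, and the only real ``idea'' needed is to spot the correct $\cI_2$ building blocks. First I would observe that since $S$ is Hermitian, ``non-positive off-diagonal entries'' forces those entries to be real, so $S$ is effectively a real symmetric matrix with $S_{ij} \le 0$ for $i \ne j$, with $S_{ii} \ge 0$, and with $\sum_j S_{ij} \ge 0$ for every $i$. For each pair $i<j$ I would introduce the ``edge'' matrix $B_{ij}$ having $+1$ in the $(i,i)$ and $(j,j)$ positions, $-1$ in the $(i,j)$ and $(j,i)$ positions, and $0$ elsewhere. The point is that $B_{ij} = 2\ketbra{w_{ij}}{w_{ij}}$, where $\ket{w_{ij}} = (\ket{i}-\ket{j})/\sqrt{2}$ is $2$-incoherent, so $B_{ij}$ is a non-negative multiple of an element of $\cI_2$; likewise each diagonal matrix unit $\ketbra{i}{i}$ lies in $\cI_1 \subseteq \cI_2$.

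The main computation is to match off-diagonal entries. I would set $S' := -\sum_{i<j} S_{ij} B_{ij}$, which is a non-negative combination of the $B_{ij}$ precisely because $-S_{ij} \ge 0$ (this is where non-positivity of the off-diagonals is used). By construction the $(i,j)$ entry of $S'$ equals $S_{ij}$ for every $i \ne j$, so $S - S'$ is diagonal. The crucial structural feature is that each $B_{ij}$ has vanishing row sums, so the $(i,i)$ entry of $S'$ equals $-\sum_{j \ne i} S_{ij}$; hence the $i$-th diagonal entry of $S - S'$ is exactly the $i$-th row sum $\sum_j S_{ij}$ of $S$.

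Finally I would invoke the row-sum hypothesis: every such diagonal entry is non-negative, so $S - S'$ is a non-negative diagonal matrix, that is, a non-negative combination of the $\ketbra{i}{i} \in \cI_2$. Writing $S = S' + (S - S')$ then exhibits $S$ as a non-negative sum of elements of $\cI_2$, which completes the proof. I do not expect a genuine obstacle here; the proof is entirely transparent once the building blocks are chosen, and each hypothesis enters in an isolated way—non-positivity of the off-diagonals makes the coefficients $-S_{ij}$ non-negative, the zero-row-sum property of the $B_{ij}$ converts the row-sum condition on $S$ into non-negativity of the leftover diagonal, and Hermiticity guarantees consistency of the whole decomposition.
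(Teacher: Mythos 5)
Your proof is correct and follows essentially the same route as the paper: your $-S_{ij}B_{ij}$ is exactly the paper's building block $G^{(i,j)}$, and peeling off a non-negative diagonal remainder via the row-sum hypothesis is just the paper's reduction to the zero-row-sum case performed in the opposite order. No issues.
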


\begin{proof}
    Without loss of generality, we can consider the case when the row sums of $S$ equal $0$, since if they are strictly positive then we can just write $S = \tilde{S} + D$ where $\tilde{S}$ has $0$ row sums and $D \in \cI_1 \subset \cI_2$.
    
    For each $1 \le i < j \le n$, let $s_{i,j}$ be the $(i,j)$-entry of $S$ and let $G^{(i,j)}$ be the $n\times n$ matrix with entries defined by
    \begin{align*}
        (G^{(i,j)})_{k,\ell} & = \begin{cases}
            s_{i,j}, & \text{if} \ \{k,\ell\} = \{i,j\} \\
            |s_{i,j}|, & \text{if} \ \{k,\ell\} = \{i\} \ \text{or} \ \{k,\ell\} = \{j\} \\
            0, & \text{otherwise}.
        \end{cases}
    \end{align*}

    Observe that each $G^{(i,j)}$ is positive semidefinite because it is diagonally dominant, and is only non-zero on a $2 \times 2$ submatrix and is thus a non-negative sum of matrices in $\cI_2$. Since $S = \sum_{i=1}^n\sum_{j=i+1}^n G^{(i,j)}$, it follows that $S$ is a non-negative sum of matrices in $\cI_2$ too.
\end{proof}

We now divide the proof into three cases: (1)~$\ell=k$, (2)~$\ell=1$, and (3)~$1<\ell<k$. Case~(1) is the ``easy'' case, and for it we are able to construct an explicit solution $s\sigma$. On the other hand, cases~(2) and~(3) are much more involved and we are only able to prove existence of a solution---we do not explicitly construct it.

\subsubsection{The solution when \texorpdfstring{$\ell = k$}{l=k}} 

We start by defining
\begin{align*}
    \mathbf{u}_j := \sum_{i=1}^{k-1} \sqrt{\frac{v_j}{s_k}}v_i \ket{i} + \sqrt{s_k v_j} \ket{j} \quad \text{for} \quad j = k, \dots, n,
\end{align*}
where $\ket{i}$ refers to the $i$-th standard basis vector. We then define $s\sigma := \sum_{j=k}^n \mathbf{u}_j\mathbf{u}_j^t - \ketbra{v}{v}$. Since each $\mathbf{u}_j$ has at most $k$ non-zero entries, it follows that $\ketbra{v}{v} + s\sigma \in (1+s)\cI_k$. Also, direct calculation shows that $s\sigma = (O_{k-1} \oplus S)$, where
\[
    S = \diag(v_ks_k, v_{k+1}s_{k}, \ldots, v_ns_k)-(v_iv_j)_{k \le i, j \le n},
\]
which has non-negative diagonal entries and non-positive off-diagonal entries. Lemma~\ref{lem:inclusion_in_i2} thus tells us that $S$ is a non-negative sum of matrices in $\cI_2 \subset \cI_k$, so $s\sigma$ is as well. Moreover,
\[
    \tr(S) = \sum_{i=k}^{n} v_is_k - \sum_{i=k}^n v_i^2 = s_k^2 - \sum_{i=k}^n v_i^2 = \beta^2 - 1,
\]
which is the desired objective value. This completes the proof of the $\ell = k$ case.


\subsubsection{The solution when \texorpdfstring{$\ell=1$}{l=1}}

Notice that in this case we have $v_2 < s_2/(k-1)$, which is equivalent to $v_1 < s_1/k$. Let $\cS_n$ be the set of symmetric matrices with non-negative diagonal entries, non-positive off-diagonal entries, and row sums equal to zero. Let $\cT_k$ be the convex
hull of the matrices of the form $\mathbf{x}\mathbf{x}^t$, where $\mathbf{x} \in \IR^n$ has $k$ nonzero entries each equal to $s_1/k$. 
Our goal now is to show that there exists a matrix $S := s\sigma \in \cS_n$ such that $\tau := \ketbra{v}{v} + S \in \cT_k$. If such an $S$ does indeed exist then we are done, since Lemma~\ref{lem:inclusion_in_i2} then tells us that $S$ is a non-negative sum of matrices in $\cI_2$, so $\sigma \in \cI_2 \subset \cI_k$, $\tau \in \cI_k$ by construction since each $\mathbf{x}$ has $k$ non-zero entries, and $\tr(S) = s = s_1^2/k - 1$ (as desired) since $\tau \in \cT_k$ implies $\tr(\tau) = s_1^2/k$. Thus $S$ defines a feasible point of the minimization problem~\eqref{eq:standard_k_coh} that produces the desired value in the objective function.

To see that there exists $S \in \cS_n$ such that $\tau \in \cT_k$, suppose for the sake of contradiction that no such $S$ exists. It is straightforward to verify that $\cS_n$ is a closed convex cone consisting of non-negative combinations of matrices of the form $(\ket{i}-\ket{j})(\bra{i}-\bra{j})$. Thus, since no such $S \in \cS_n$ exists, there is a separating hyperplane on real symmetric matrices separating the convex cone $\ketbra{v}{v} + \cS_n$ from the compact convex set $\cT_k$. This separating hyperplane may be represented by a real symmetric matrix $H$ with the property that
\[
    \tr\big((\ketbra{v}{v}+A)H\big) > \tr(\tau H)
\]
for all $A \in \cS_n$ and $\tau \in \cT_k$. By convexity of the sets in question, this is equivalent to
\[
    \tr\big((\ketbra{v}{v}+c(\ket{i}-\ket{j})(\bra{i}-\bra{j}))H\big) > \tr (\mathbf{x}\mathbf{x}^t H)
\]
for all $c > 0$, all standard basis states $\ket{i}$, $\ket{j}$, and all $\mathbf{x}\in \IR^n$ with $k$
nonzero entries, each equal to $s_1/k$. Dividing both sides by $s_1^2$ then gives us the following equivalent condition, where $\mathbf{u} := (1/s_1)\ket{v}$:
\begin{align}\label{eq:sep_hyp}
    \tr\big((\mathbf{u}\mathbf{u}^t+c(\ket{i}-\ket{j})(\bra{i}-\bra{j}))H\big) > \tr (\mathbf{x}\mathbf{x}^t H)
\end{align}
for all $c > 0$, all standard basis states $\ket{i}$, $\ket{j}$, and all $\mathbf{x}\in \IR^n$ with $k$
nonzero entries, each equal to $1/k$.

We will now show that there does not exist a real symmetric matrix $H$ satisfying this condition, so our original assumption that there is no $S \in \cS_n$ with $\tau \in \cT_k$ must be incorrect. We break down the proof of the non-existence of $H$ into two lemmas, and throughout them we let $V(n,k)$ denote the set of all $\mathbf{x}\in \IR^n$ with $k$ nonzero entries each equal to $1/k$.

\begin{lemma} \label{thm-1}
    Let $\mathbf{u} = (u_1, \dots, u_n)^t \in \IR^n$ with $u_i \in [0, 1/k]$ and $u_1 + \cdots + u_n = 1$. The following statements are equivalent.
    \begin{itemize}
        \item[{\rm (a)}] There is a symmetric matrix $H = (h_{i,j})$ such that
        \[
            \tr[(\mathbf{u}\mathbf{u}^t +c(\mathbf{w}\mathbf{w}^t))H] > \tr(\mathbf{x}\mathbf{x}^t H)
        \]
        for all $c \ge 0$, $\mathbf{w} = \ket{i}-\ket{j}$, and $\mathbf{x}\in V(n,k)$.
        
        \item[{\rm (b)}] There is a symmetric matrix $H = (h_{i,j})$ (with positive entries) such that
        
        \medskip\centerline{$h_{i,i} + h_{j,j} \ge h_{i,j}+h_{j,i}$ for all $1 \le i < j \le n$, and}
        
        \medskip\centerline{$\tr(\mathbf{u}\mathbf{u}^t H) > \tr(\mathbf{x}\mathbf{x}^t H)$ for all $\mathbf{x}\in V(n,k)$.}
    \end{itemize}
\end{lemma}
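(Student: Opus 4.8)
The plan is to show that condition (a) is equivalent to the two inequality conditions of (b), and then to separately arrange for the matrix to have positive entries. First I would exploit the fact that the left-hand side of the inequality in (a) is affine in $c$. Since $\mathbf{w} = \ket{i}-\ket{j}$ gives $\tr(\mathbf{w}\mathbf{w}^t H) = \mathbf{w}^t H \mathbf{w} = h_{i,i} + h_{j,j} - h_{i,j} - h_{j,i}$, letting $c \to \infty$ forces $h_{i,i} + h_{j,j} \ge h_{i,j} + h_{j,i}$ for every pair $i < j$ (otherwise the left-hand side runs to $-\infty$ and the inequality fails for large $c$), while setting $c = 0$ yields $\tr(\mathbf{u}\mathbf{u}^t H) > \tr(\mathbf{x}\mathbf{x}^t H)$ for all $\mathbf{x} \in V(n,k)$. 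Conversely, if these two conditions hold, then $c\,\tr(\mathbf{w}\mathbf{w}^t H) \ge 0$ for every $c \ge 0$, so the inequality in (a) follows immediately (and positivity of the entries is not even needed for this direction). Thus (a) is equivalent to the pair of inequality conditions in (b), modulo the requirement that $H$ have positive entries.

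The only remaining point is to upgrade an arbitrary $H$ satisfying these two conditions to one whose entries are all positive. I would do this with the shift $H \mapsto H + t J_n = H + t\mathbf{1}\mathbf{1}^t$ for a sufficiently large scalar $t > 0$, which obviously makes every entry positive. The key is that both conditions are invariant under this shift. Indeed, $\mathbf{w} = \ket{i}-\ket{j}$ satisfies $\mathbf{1}^t\mathbf{w} = 0$, so $\mathbf{w}^t J_n \mathbf{w} = (\mathbf{1}^t\mathbf{w})^2 = 0$ and the diagonal-dominance condition $\mathbf{w}^t H \mathbf{w} \ge 0$ is unchanged; and since $\mathbf{1}^t\mathbf{u} = \mathbf{1}^t\mathbf{x} = 1$ (recall $u_1 + \cdots + u_n = 1$, and every $\mathbf{x} \in V(n,k)$ has $k$ entries equal to $1/k$), we get $\tr(\mathbf{u}\mathbf{u}^t J_n) = \tr(\mathbf{x}\mathbf{x}^t J_n) = 1$, so the difference $\tr(\mathbf{u}\mathbf{u}^t H) - \tr(\mathbf{x}\mathbf{x}^t H)$ is also unaffected and the strict inequality persists.

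Assembling these two observations yields both directions: (b) $\Rightarrow$ (a) is the easy converse noted above, and (a) $\Rightarrow$ (b) follows by extracting the two inequality conditions and then applying the shift to secure positive entries. I expect the main (and essentially only) obstacle to be recognizing that the normalization constraints $\mathbf{1}^t\mathbf{u} = \mathbf{1}^t\mathbf{x} = 1$ and $\mathbf{1}^t\mathbf{w} = 0$ are exactly what make the rank-one shift $t\mathbf{1}\mathbf{1}^t$ act trivially on all three relevant quadratic forms; once this is seen, the rest is a routine linearity-in-$c$ argument.
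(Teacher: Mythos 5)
Your proposal is correct and follows essentially the same route as the paper's proof: extract the diagonal-dominance condition by letting $c$ grow (contradiction otherwise), obtain the strict inequality at $c=0$, reverse via $c\,\tr(\mathbf{w}\mathbf{w}^t H)\ge 0$, and secure positive entries by adding a large multiple of $J_n$, which leaves all three quadratic forms' comparisons unchanged. Your explicit observation that $\mathbf{1}^t\mathbf{w}=0$ and $\mathbf{1}^t\mathbf{u}=\mathbf{1}^t\mathbf{x}=1$ make the shift act trivially is a slightly cleaner packaging of the same fact the paper uses.
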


\begin{proof}
    Suppose (a) holds. If there exist $1\le i<j\le n$ such that $h_{i,i}+h_{j,j} - h_{i,j}-h_{j,i}< 0$, then letting $\mathbf{w} =\ket{i}-\ket{j}$, we can find a sufficiently large $c > 0$ such that for all $\mathbf{x} \in \IR^n$ with $k$ nonzero entries each equal to $1/k$,  
    $$\tr[(\mathbf{u}\mathbf{u}^t+c(\mathbf{w}\mathbf{w}^t))H] < \tr (\mathbf{x}\mathbf{x}^t H),$$
    for all $\mathbf{x}\in V(n,k)$, which is a contradiction. Therefore $H = (h_{i,j})$ such that $h_{i,i}+h_{j,j} - h_{i,j}-h_{j,i}\ge 0$ for all $i < j$. Now,
    $$\tr[(\mathbf{u}\mathbf{u}^t+c(\mathbf{w}\mathbf{w}^t))H] > \tr (\mathbf{x}\mathbf{x}^t H)$$
    for all $c \ge 0$. We conclude that
    $$\tr(\mathbf{u}\mathbf{u}^t H) > \tr (\mathbf{x}\mathbf{x}^t H).$$

    In the opposite direction, now suppose (b) holds. Then $\tr[(\mathbf{w}\mathbf{w}^t H)] \ge 0$, and hence, for all $c \ge 0$, we have
    $$\tr[(\mathbf{u}\mathbf{u}^t+c (\mathbf{w}\mathbf{w}^t))H]\ge \tr(\mathbf{u}\mathbf{u}^t H) > \tr (\mathbf{x}\mathbf{x}^t H),$$
    for all $\mathbf{x}\in V(n,k)$. To show that we may assume $H$ has positive entries, let $J_n$ be the all-ones matrix. Then $\tr(\mathbf{u}\mathbf{u}^t J_n) = 1 = \tr(\mathbf{x}\mathbf{x}^t J_n)$. Thus, we may replace $H$ by $H+\mu J_n$ for sufficiently large $\mu > 0$ so that the resulting matrix has positive entries without changing the conditions $h_{i,i} + h_{j,j} \ge h_{i,j} + h_{j,i}$ for $1 \le i < j \le n$, and $\tr(\mathbf{u}\mathbf{u}^t H) > \tr (\mathbf{x}\mathbf{x}^t H)$ for all $\mathbf{x}\in V(n,k)$.
\end{proof}

\begin{lemma}\label{thm2}
    There is no matrix $H$ satisfying condition~(b) of Lemma~\ref{thm-1}.
\end{lemma}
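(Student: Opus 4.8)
The plan is to reinterpret condition~(b) geometrically as a statement about maximizing the quadratic form $q(\mathbf{y}) := \tr(\mathbf{y}\mathbf{y}^t H) = \mathbf{y}^t H \mathbf{y}$ over a polytope, and then to derive a contradiction by showing that this maximum is always attained at a vertex. First I would observe that the hypotheses on $\mathbf{u}$ in Lemma~\ref{thm-1}---namely $u_i \in [0,1/k]$ and $\sum_i u_i = 1$---say exactly that $\mathbf{u}$ lies in the polytope $P := \{\mathbf{y} \in \IR^n : 0 \le y_i \le 1/k,\ \sum_i y_i = 1\}$, a scaled hypersimplex whose vertex set is precisely $V(n,k)$ (a point of $P$ with every coordinate in $\{0,1/k\}$ has exactly $k$ nonzero entries, each equal to $1/k$). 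Condition~(b) then asserts that $q(\mathbf{u}) > q(\mathbf{x})$ for \emph{every} vertex $\mathbf{x}$ of $P$. It therefore suffices to prove that the maximum of $q$ over $P$ is attained at a vertex, since that yields $q(\mathbf{u}) \le \max_P q = \max_{\mathbf{x} \in V(n,k)} q(\mathbf{x})$, contradicting the strict inequality in~(b).

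The engine of the argument is that the first displayed inequality in~(b), $h_{i,i}+h_{j,j}\ge 2h_{i,j}$, is exactly the statement that $(\ket{i}-\ket{j})^t H (\ket{i}-\ket{j}) = h_{i,i}+h_{j,j}-2h_{i,j}\ge 0$; in other words, $q$ is convex along every direction of the form $\ket{i}-\ket{j}$. Since each edge of $P$ joins vertices $\tfrac{1}{k}\mathbf{1}_T$ and $\tfrac{1}{k}\mathbf{1}_{T'}$ with $|T\triangle T'|=2$, every edge of $P$ points in such a direction. I would then run a ``push to the boundary'' argument: among all maximizers of $q$ on $P$, take one, $\mathbf{y}^*$, with the fewest coordinates strictly between $0$ and $1/k$, and suppose it is not a vertex. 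A short integrality check shows that $P$ has no point with exactly one interior coordinate (if $m$ coordinates equal $1/k$ and one coordinate $y_a$ is free, then $y_a=(k-m)/k$ cannot lie in $(0,1/k)$ for integer $m$), so $\mathbf{y}^*$ has at least two interior coordinates $a,b$. Sliding $\mathbf{y}^* + t(\ket{a}-\ket{b})$ preserves $\sum_i y_i = 1$ and stays in $P$ for $t$ in some interval $[t_-,t_+]$ with $t_-<0<t_+$; convexity of $t \mapsto q(\mathbf{y}^* + t(\ket{a}-\ket{b}))$ together with optimality of $\mathbf{y}^*$ forces this map to be constant, so moving to an endpoint drives a coordinate to a bound without decreasing $q$, strictly reducing the count of interior coordinates and contradicting minimality.

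Hence every $q$-maximizer may be taken to be a vertex, which completes the contradiction and shows that no $H$ satisfies condition~(b). The only genuinely delicate point---and the step I expect to be the main obstacle---is this vertex-attainment claim: because $H$ is assumed convex only along the coordinate-difference directions $\ket{i}-\ket{j}$, and not positive semidefinite on the whole hyperplane $\{\sum_i y_i = 0\}$, the form $q$ need \emph{not} be convex on $P$, so one cannot simply invoke ``a convex function is maximized at a vertex.'' The resolution is precisely to exploit that every edge of $P$ lies along one of the directions $\ket{i}-\ket{j}$ for which convexity does hold, which is what makes the coordinate-by-coordinate pushing go through; the integrality observation ruling out a single interior coordinate is the small technical ingredient guaranteeing that each push makes strict progress. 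I note that the positivity of the entries of $H$ supplied by Lemma~\ref{thm-1} plays no role in this argument and can be ignored.
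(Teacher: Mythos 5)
Your proof is correct and follows essentially the same route as the paper's: both arguments maximize the quadratic form $\mathbf{y}^t H\mathbf{y}$ over the polytope $\{\,0\le y_i\le 1/k,\ \sum_i y_i=1\,\}$ and use the edge-convexity condition $h_{i,i}+h_{j,j}\ge 2h_{i,j}$ to slide a maximizer along directions $\ket{i}-\ket{j}$ until it becomes a vertex in $V(n,k)$, contradicting the strict inequality in condition~(b). Your write-up is in fact slightly tidier on two points the paper leaves implicit, namely the integrality observation ruling out a single interior coordinate and the remark that the positivity of the entries of $H$ is not actually needed.
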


\begin{proof}
    Let $B$ be the set of all $k$-tuples $\mathbf{b} = (b_1, \dots, b_k)$ with $b_1, \dots, b_k\in \{1, \dots, n\}$. Suppose such a positive symmetric matrix $H$ exists. We may replace $H$ by $\mu H$ for some $\mu > 0$ and assume that 
    $$\max\{ \mathbf{x}_{\mathbf{b}}^t H \mathbf{x}_\mathbf{b}: \mathbf{b}\in B\} = 1,$$
    where $\mathbf{x}_\mathbf{b} = \frac{1}{k} \sum_{j=1}^k \ket{b_j}$.
    Consider the set of vectors
    \begin{eqnarray*}
       & \cW = \{ \mathbf{w} = (w_1, \dots, w_n)^t:  w_i \in [0, 1/k], \\
       &\quad  \quad w_1 + \cdots + w_n = 1 
    < \mathbf{w}^t H \mathbf{w}\}.
    \end{eqnarray*}
    Then $\mathbf{u}$ from Theorem \ref{thm-1} (b) is in $\cW$, and there is a vector $\mathbf{w} \in \cW$ attaining the maximum values $\mathbf{w}^tH\mathbf{w}$. Should there be more than one vector 
    $\mathbf{w}= (w_1, \dots, w_n)^t \in \cW$ attaining the value, we consider any $\mathbf{w}$ with maximum number of entries equal to $1/k$.

    Our goal is to show that we can choose $\mathbf{w}$ such that $k$ of its entries are $1/k$. To this end, suppose for now that there exist $1 \le r < s \le n$ such that ${w}_r, {w}_s \in (0,1/k)$. Then we can choose a suitable $\lambda\neq0$ and replace $\mathbf{w}$ by $\tilde{\mathbf{w}} = \mathbf{w} + \lambda(\ket{r} - \ket{s}) \in \cW$ so that $\{\tilde{w}_r, \tilde  {w}_s\} = \{ {w}_r+ {w}_s, 0\}$ or $\{\tilde  {w}_r, \tilde  {w}_s\} = \{1/k,  {w}_r+ {w}_s - 1/k\}$ depending on whether $ {w}_r+ w_s \ge 1/k$ or not. 

    We now claim that $\tr(\tilde \w\tilde \w^t H) = \tr(\w\w^tH)$. To see why this is the case, we write $H$ in terms of its columns: $H = [ \ \mathbf{h}_1 \ | \ \mathbf{h}_2 \ | \ \cdots \ | \ \mathbf{h}_n \ ]$, and compute
    \begin{eqnarray*}
        \tr (\tilde \w\tilde \w^t H) &=& \tr (\w \w^t H) + 2\lambda (\w^t \mathbf{h}_r - \w^t \mathbf{h}_s) \\
        &&\quad + \ \lambda^2 (h_{r,r} + h_{s,s} - h_{r,s}-h_{s,r}).
    \end{eqnarray*}
    This quantity cannot be strictly greater than $\tr(\w \w^t H)$, since that would contradict the fact that $\w$ was chosen to maximize $\w^tH\w$. On the other hand, if it is strictly less than $\tr(\w\w^t H)$ then
    \[
        2\lambda (\w^t \mathbf{h}_r - \w^t \mathbf{h}_s) + \lambda^2 (h_{r,r} + h_{s,s} - h_{r,s}-h_{s,r}) < 0.
    \]
    Replacing $\lambda$ by $-\lambda$ would change the sign of this inequality, so we could construct another vector $\hat{\w}$ in a manner similar to $\tilde{\w}$ by replacing $\lambda$ by $-\lambda$ (and possibly making it smaller, if necessary, so that the entries are still between $0$ and $1/k$) so that $\tr (\hat \w\hat\w^t H) > \tr(\w\w^t H)$, which again contradicts the fact that $\w$ was chosen to maximize $\w^tH\w$. Thus it must be the case that $\tr(\tilde \w\tilde \w^t H) = \tr(\w\w^tH)$.
    
    By the assumption on $\mathbf{w}$, we cannot have $\{w_r, w_s\} = \{1/k, w_r+w_s - 1/k\}$. But then we can repeat the arguments until we get more and more entries equal to $0$. Ultimately, we will obtain two entries $w_r, w_s$ with $w_r+w_s \ge 1/k$ and arrive at the situation $\{w_r, w_s\} = \{1/k, w_r+w_s - 1/k\}$. 
    
    It follows that, to avoid this condition, there cannot be two entries of $\mathbf{w}$ lying in $(0,1/k)$ to begin with. But then $\mathbf{w}$ must have exactly $k$ nonzero entries each equal to $1/k$. Thus $\mathbf{w} = \mathbf{x}_\mathbf{b}$ for some $\mathbf{b} \in B$ and 
    $$\mathbf{w}^tH\mathbf{w} = \mathbf{u}^tH\mathbf{u} > 1 =  \max\{\mathbf{x}_{\mathbf{b}}^t H \mathbf{x}_\mathbf{b}: \mathbf{b}\in B\},$$
    which is a contradiction that shows that $H$ does not exist.
\end{proof}

Combining Lemmas~\ref{thm-1} and~\ref{thm2} immediately shows that there does not exist a real symmetric matrix $H$ satisfying Inequality~\eqref{eq:sep_hyp}, which is the contradiction we have been striving for that completes the proof in this case.

\subsubsection{The solution when \texorpdfstring{$1 < \ell < k$}{1<l<k}}

Finally, we now consider the case when $1 < \ell < k$. Recall that our aim is to show there exists a $\sigma \in \cI_k$ such that $\tau := \ketbra{v}{v} + s\sigma \in (1+s)\cI_k$, where $s = \beta^2-1$ is the quantity described in the statement of the theorem. We will find that we can choose $\sigma$ to have the block diagonal form $s\sigma = O_{\ell-1} \oplus S$ with $S \in \cS_{n-\ell+1}$ (and $\cS_{n-\ell+1}$ is as defined in the $\ell = 1$ case).

First, let $\tilde{\mathbf{v}} = (v_\ell, \dots, v_n)^t$. By mimicking the argument presented for the $\ell = 1$ case, we know that there exists $S \in \cS_{n-\ell+1}$ such that $\tilde{\tau} := \tilde{\mathbf{v}}\tilde{\mathbf{v}}^t + S \in \cT_{k-\ell+1}$ (where $\cT_{k-\ell+1}$ is now the convex hull of the matrices of the form $\tilde{\mathbf{x}}\tilde{\mathbf{x}}^t$, where $\tilde{\mathbf{x}} \in \IR^{n-\ell+1}$ has $k-\ell+1$ nonzero entries each equal to $s_\ell/(k-\ell+1)$). Thus we can write
$$
    \tilde{\tau} = \sum p_{\tilde{\mathbf{x}}}\tilde{\mathbf{x}}\tilde{\mathbf{x}}^t,
$$
where the $p_{\tilde{\mathbf{x}}}$'s are coefficients in the convex combination.

For each $\tilde{\mathbf{x}}$, construct $\mathbf{x} \in \IR^n$ via $\mathbf{x} = (v_1, \dots, v_{\ell-1})^t \oplus \tilde{\mathbf{x}}$, and then define
$$
    \tau := \sum p_{\tilde{\mathbf{x}}}\mathbf{x}\mathbf{x}^t.
$$
Notice that $\tau \in (1+s)\cI_k$ by construction and
\[
    \tr(S) = \tr(\tilde{\tau} - \tilde{\mathbf{v}}\tilde{\mathbf{v}}^t) = \frac{s_\ell^2}{k-\ell+1} - \sum_{i=\ell}^{n}v_i^2,
\]
which is the desired objective value. Thus if we can show that $\tau = \ketbra{v}{v} + (O_{\ell-1} \oplus S)$ then we are done.

By construction of $\tau$, these two matrices trivially agree on their bottom-right $(n-\ell+1)\times(n-\ell+1)$ submatrices. The top-left $(\ell-1)\times(\ell-1)$ submatrix of $\tau$ equals
\begin{align*}
    & \sum p_{\tilde{\mathbf{x}}}(x_1,\dots, x_{\ell-1})^t(x_1, \dots, x_{\ell-1}) \\
    & \qquad \qquad \qquad \qquad = (v_1,\dots, v_{\ell-1})^t(v_1, \dots, v_{\ell-1}),
\end{align*}
which is the top-left $(\ell-1)\times(\ell-1)$ submatrix of $\ketbra{v}{v} + (O_{\ell-1} \oplus S)$.

Finally, for the bottom-left $(n-\ell+1) \times (\ell-1)$ submatrices, first recall that $S$ has row sums $0$, so the row sums of $\tilde{\tau} = \tilde{\mathbf{v}}\tilde{\mathbf{v}}^t + S$ are $v_\ell s_\ell, v_{\ell+1}s_\ell, \ldots, v_n s_\ell$. Thus
\begin{align*}
    \tilde{\tau}{\bf 1}_{n-\ell+1} & = s_\ell(v_{\ell},v_{\ell+1},\ldots,v_n)^t.
\end{align*}
On the other hand, we can directly compute
\begin{align*}
    \tilde{\tau}{\bf 1}_{n-\ell+1} & = \sum p_{\tilde{\mathbf{x}}}\tilde{\mathbf{x}}\tilde{\mathbf{x}}^t{\bf 1}_{n-\ell+1} = s_\ell\sum p_{\tilde{\mathbf{x}}}\tilde{\mathbf{x}}.
\end{align*}
By combining the two above formulas for $\tilde{\tau}{\bf 1}_{n-\ell+1}$, we see that
$$
    \sum p_{\tilde{\mathbf{x}}}\tilde{\mathbf{x}} = (v_{\ell},v_{\ell+1},\ldots,v_n)^t.
$$
It follows that the bottom-left $(n-\ell+1) \times (\ell-1)$ submatrix of $\tau$ equals
\begin{align*}
    & \sum p_{\tilde{\mathbf{x}}}\tilde{\mathbf{x}}(v_1, v_2, \ldots, v_{\ell-1}) \\
    & \qquad \qquad \qquad = (v_\ell, v_{\ell+1}, \ldots, v_n)^t(v_1, v_2, \ldots, v_{\ell-1}),
\end{align*}
which is the bottom-left $(n-\ell+1) \times (\ell-1)$ submatrix of $\ketbra{v}{v} + (O_{\ell-1} \oplus S)$, as desired. We thus conclude that $\tau = \ketbra{v}{v} + (O_{\ell-1} \oplus S)$, which completes the proof.

\section{IV. Implementation}

Despite finding a formula for $R_k^s(\ketbra{v}{v})$, we did not explicitly construct the optimal matrix $\sigma$ in the defining minimization problems~\eqref{eq:standard_k_coh} and~\eqref{eq:gen_k_coh} except in the $\ell = k$ case. Rather, we gave an existence proof based on the non-existence of a hyperplane separating two convex sets.

However, we did unearth some of the structure of an optimal $\sigma$ that makes it easier to find than na\"ive optimization methods. In particular, although the constraint $S \in \cI_k$ can be implemented using semidefinite programming, the complexity of this constraint grows combinatorially with the matrix dimension $n$, making the computation infeasible in practice beyond low-dimensional cases. However, the proof of Theorem~\ref{thm:main} showed that the optimal $\sigma$ can be chosen to have the form $O_{\ell-1} \oplus S$, where instead of requiring $S \in \cI_k$, we require it to have non-negative diagonal entries, non-positive off-diagonal entries, and row sums $0$, which can be checked via \emph{linear} programming.

Similarly, instead of requiring $\ketbra{v}{v} + s\sigma$ to be in $(1+s)\cI_k$, we can require it to be a convex combination of the (finitely many) matrices of the form $\mathbf{x}\mathbf{x}^t$, where each $\mathbf{x}$ is of the form $\mathbf{x} = (v_1,v_2,\ldots,v_{\ell-1}) \oplus \tilde{\mathbf{x}}$ for some $\tilde{\mathbf{x}}$ with exactly $k-\ell+1$ non-zero entries, each equal to $s_\ell/(k-\ell+1)$. Again, linear programming can be used to find the coefficients in this convex combination, so we can construct $\sigma$ via linear programming, which is significantly faster than semidefinite programming in practice. MATLAB code that implements this linear program (as well as the na\"{i}ve methods based on semidefinite programming via CVX \cite{CVX}, and the result of Theorem~\ref{thm:main}) is available online from \cite{MATLABcode}.

\section{V. Application to entanglement measures}

Our result can be extended from the robustnesses of $k$-coherence to the analogous measures of entanglement of pure states. Let $SR(\ket{v})$ denote the Schmidt rank of the pure state $\ket{v}$ and let $SN(\rho)$ denote the Schmidt number \cite{TH00} of a mixed state $\rho \in \cD_{mn}$. That is, $SN(\rho)$ is the least integer $k$ such that we can write
\[
    \rho = \sum_i p_i \ketbra{v_i}{v_i}
\]
with $p_i \geq 0$ and $SR(\ket{v_i}) \leq k$ for all $i$. The $k$-projective tensor norm \cite{Rud00,Nconj} and the $k$-robustnesses of entanglement \cite{Clarisse,Rud05,VT99} are defined, respectively, via
\begin{eqnarray}
    \big\|X\big\|_{\gamma,k} & \defeq & \inf\Big\{ \sum_i |c_i| : X = \sum_i c_i\ketbra{v_i}{w_i} \text{ with}\nonumber \\
    &&\quad \quad SR(\ket{v_i}), SR(\ket{w_i}) \leq k \ \forall \, i \Big\}, \\
    R_k^{E,s}(\rho)& \defeq &\min_{\sigma : SN(\sigma) \leq k}\left\{s\geq 0\,:\, SN\left(\frac{\rho+s\sigma}{1+s}\right) \leq k \right\} \quad \ \ {}_{} \label{eq:standard_k_ent} \\
    R_k^{E,g}(\rho)& \defeq &\min_{\tau \in \cD_{mn}}\left\{s\geq 0\,:\, SN\left(\frac{\rho+s\tau}{1+s}\right) \leq k \right\}. \qquad {}_{} \label{eq:gen_k_ent}
\end{eqnarray}
It was shown in \cite[Theorem~10]{Reg17} that, for any pure state $\ket{v}\in \mathbb{C}^m \otimes \mathbb{C}^n$ and for any $k=1, \dots, \min\{m, n\}$, the equality $R_k^{E,g}(\ketbra{v}{v})=\|\ketbra{v}{v}\|_{\gamma, k}-1$ holds, and it was conjectured in \cite{Nconj} that the equality $R_k^{E,s}(\ketbra{v}{v})=\|\ketbra{v}{v}\|_{\gamma, k}-1$ holds.
 In addition to the trivial case of $k=\min\{m, n\}$ where the robustness $R_k^{E,s}$ of any state is equal to $0$, the conjecture was shown to be true when $k=1$ in \cite{VT99,Rud01}, since if $\ket{\lambda} := (\lambda_1,\lambda_2,\ldots,\lambda_r)^t$ is the vector of Schmidt coefficients of $\ket{v}$, then we have the explicit formulas
\begin{align*}
    \|\ketbra{v}{v}\|_{\gamma, 1} &= \left(\sum_{i=1}^r\lambda_i\right)^2 \quad \text{and} \\
    R_1^{E,s}(\ketbra{v}{v}) & = \left(\sum_{i=1}^r\lambda_i\right)^2 - 1.
\end{align*}
More generally, in \cite[Theorem~5.1]{Nconj} it was established that
\begin{align}\label{eq:ent_norm_rob_coh}
    \|\ketbra{v}{v}\|_{\gamma, k} &=  R_k^s(\ketbra{\lambda}{\lambda})+1\\
    &= \knorm{\ket{\lambda}}{k}^2,
\end{align}
where $R_k^s(\ketbra{\lambda}{\lambda})$ is given by the formula of Theorem~\ref{thm:main}. As an application of our main result (Theorem~\ref{thm:main}), we now show that this conjecture holds for all other values of $k$ as well.

\begin{theorem}\label{thm:robustness_entanglement}
    Let $\ket{v} \in \mathbb{C}^m \otimes \mathbb{C}^n$ be a pure state with non-zero Schmidt coefficients $\lambda_1, \lambda_2, \ldots, \lambda_r$ and define $\ket{\lambda} := (\lambda_1,\lambda_2,\ldots,\lambda_r)^t$. Then
\[
    R_k^{E,s}(\ketbra{v}{v}) = R_k^s(\ketbra{\lambda}{\lambda}) = \|\ketbra{v}{v}\|_{\gamma, k} - 1.
\]\end{theorem}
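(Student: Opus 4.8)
The plan is to prove the two equalities separately, observing that the second one, $R_k^s(\ketbra{\lambda}{\lambda}) = \|\ketbra{v}{v}\|_{\gamma,k}-1$, is already available from \cite[Theorem~5.1]{Nconj} via Eq.~\eqref{eq:ent_norm_rob_coh}, so that the real work lies in establishing $R_k^{E,s}(\ketbra{v}{v}) = R_k^s(\ketbra{\lambda}{\lambda})$. I would prove this by a sandwich argument, showing that the left-hand side is both at least and at most the right-hand side.

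For the lower bound I would invoke only results already in the literature. Since $\{\sigma : SN(\sigma)\le k\} \subseteq \cD_{mn}$, the feasible set defining $R_k^{E,s}$ is contained in the one defining $R_k^{E,g}$, so that $R_k^{E,s}(\ketbra{v}{v}) \ge R_k^{E,g}(\ketbra{v}{v})$. Combining the identity $R_k^{E,g}(\ketbra{v}{v}) = \|\ketbra{v}{v}\|_{\gamma,k}-1$ from \cite[Theorem~10]{Reg17} with Eq.~\eqref{eq:ent_norm_rob_coh} then gives $R_k^{E,g}(\ketbra{v}{v}) = R_k^s(\ketbra{\lambda}{\lambda})$, and hence $R_k^{E,s}(\ketbra{v}{v}) \ge R_k^s(\ketbra{\lambda}{\lambda})$.

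The upper bound is where Theorem~\ref{thm:main} enters, and it rests on transporting a $k$-incoherent decomposition into a Schmidt-number-$k$ one. Writing the Schmidt decomposition $\ket{v} = \sum_{i=1}^r \lambda_i \ket{a_i}\otimes\ket{b_i}$ with orthonormal families $\{\ket{a_i}\}$ and $\{\ket{b_i}\}$, I would introduce the linear isometry $U : \mathbb{C}^r \to \mathbb{C}^m\otimes\mathbb{C}^n$ determined by $U\ket{i} = \ket{a_i}\otimes\ket{b_i}$, so that $U\ket{\lambda} = \ket{v}$. The key observation is that if $\ket{w}\in\mathbb{C}^r$ is $k$-incoherent, with nonzero entries supported on some set $T$ with $|T|\le k$, then $U\ket{w} = \sum_{i\in T} w_i \ket{a_i}\otimes\ket{b_i}$ is already in Schmidt form with at most $k$ terms, so that $SR(U\ket{w})\le k$; consequently $\rho\in\cI_k$ implies $SN(U\rho U^\dagger)\le k$. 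I would then take the state $\sigma\in\cI_k$ produced by the upper-bound half of the proof of Theorem~\ref{thm:main} applied to $\ket{\lambda}$, for which $s := R_k^s(\ketbra{\lambda}{\lambda})$ satisfies $(\ketbra{\lambda}{\lambda}+s\sigma)/(1+s)\in\cI_k$. Conjugating by $U$ (which preserves positivity and trace and hence maps density matrices to density matrices) produces $\tilde{\sigma} := U\sigma U^\dagger$ with $SN(\tilde{\sigma})\le k$ and $(\ketbra{v}{v}+s\tilde{\sigma})/(1+s) = U[(\ketbra{\lambda}{\lambda}+s\sigma)/(1+s)]U^\dagger$, which has Schmidt number at most $k$. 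Thus $\tilde{\sigma}$ is feasible for the minimization defining $R_k^{E,s}$ at the value $s$, yielding $R_k^{E,s}(\ketbra{v}{v}) \le s = R_k^s(\ketbra{\lambda}{\lambda})$ and closing the sandwich.

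The main obstacle is conceptual rather than computational: recognizing that the Schmidt-basis isometry $U$ is precisely the bridge converting the reference-basis notion of $k$-incoherence into the basis-independent notion of Schmidt number at most $k$, and verifying that it transports \emph{both} the candidate noise state $\sigma$ and the resulting mixture simultaneously while preserving the trace and positivity constraints. Once this correspondence is in place no further calculation is required, since only the existence (not the explicit form) of the optimal $\sigma$ from Theorem~\ref{thm:main} is needed and the value carries over verbatim. I would close by remarking that the argument uses only the easy inclusion $\cI_k \to \{\,SN\le k\,\}$ and not its converse, which is exactly why the pure-state coherence result transfers cleanly to the entanglement setting.
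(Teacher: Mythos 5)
Your proposal is correct and follows essentially the same route as the paper: the lower bound via $R_k^{E,s} \geq R_k^{E,g}$ combined with the known formula from \cite{Reg17}, and the upper bound by transporting the optimal $k$-incoherent state for $\ket{\lambda}$ into a Schmidt-number-$k$ state along the map $\ket{i} \mapsto \ket{a_i}\otimes\ket{b_i}$ (the paper does this decomposition-by-decomposition after a WLOG reduction to $\ket{v}=\sum_i\lambda_i\ket{i}\otimes\ket{i}$, while you phrase it slightly more cleanly as conjugation by the Schmidt-basis isometry, but the content is identical).
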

\begin{proof}
Assume without loss of generality that the Schmidt decomposition of $\ket{v}$ has the form $\ket{v} = \sum_{i=1}^n \lambda_i \ket{i}\otimes\ket{i}$ (if it does not have this form, we can multiply it by a local unitary to bring it into this form).

Since $R_k^{E,s}(\ketbra{v}{v}) \geq R_k^{E,g}(\ketbra{v}{v})$ trivially, we immediately have $R_k^{E,s}(\ketbra{v}{v}) \geq \|\ketbra{v}{v}\|_{\gamma, k} - 1$. Alternatively, this lower bound can be shown in the same way as the lower bound in the proof of Theorem~\ref{thm:main}: noting that the dual expression for the generalized robustness can be written as 
\begin{equation}\begin{aligned}
    R_k^{E,g}(\rho) = \max_{W \succeq 0}\big\{\tr(\rho W)\,:\,I-W\in \cV_k^{\circ}\big\}-1\\
\end{aligned}\end{equation}
where $\cV_k^\circ \defeq \{W = W^\dagger\,:\,\tr(W\rho)\geq 0 \ \ \forall \, \rho : SN(\rho) \leq k \}$, we can then choose $W = \mathbf{bb}^\dagger$, where $\mathbf{b}$ is a vector which achieves the maximum in the dual formulation of the norm $\|\ketbra{v}{v}\|_{\gamma, k}$, i.e. $\|\ketbra{v}{v}\|_{\gamma, k} = |\mathbf{b}^\dagger\ket{v}|^2$ (cf. \cite[Theorem~5.1]{Nconj}). Notice that such $\mathbf{b}$ can be chosen as $\mathbf{b} = \sum_i a_i \ket{i}\otimes\ket{i}$ where $\mathbf{a}\mathbf{a}^t$ is the feasible dual solution for $R_k^g(\ketbra{\lambda}{\lambda})$ established in Eq.~\eqref{eq:feasible_a}. This in particular allows for the construction of an explicit dual feasible solution which achieves the optimal value of $R_k^{E,s}$.

To show the upper bound, let $\delta^* \in \cI_k$ be a $k$-incoherent state that attains the minimum in $R_k^s(\ketbra{\lambda}{\lambda})$. That is,
    $$
        R_k^s(\ketbra{\lambda}{\lambda}) = \min\left\{s\geq 0\,:\,\frac{\ketbra{\lambda}{\lambda}+s\delta^*}{1+s}\in \cI_k\right\}.
    $$
    Since $\delta^* \in \cI_k$, we can write it as a convex combination of pure states
    $$
        \delta^* = \sum_j p_j \ketbra{v_j}{v_j},
    $$
    where each $\ket{v_j}$ has at most $k$ non-zero entries:
    $$
        \ket{v_j} = \sum_{i=1}^k c_{i,j}\ket{i_j}.
    $$
    If we define $\ket{w_j} := \sum_{i=1}^k c_{i,j}(\ket{i_j} \otimes \ket{i_j})$ then it is the case that $SR(\ket{v_j}) \leq k$ and thus the mixed state
    $$
        \sigma^* := \sum_j p_j \ketbra{w_j}{w_j}
    $$
    has $SN(\sigma^*) \leq k$. A calculation then reveals that
    \begin{align*}
        R_k^{E,s}(\ketbra{v}{v}) & = \min_{\sigma : SN(\sigma) \leq k}\left\{s\geq 0 : SN\left(\frac{\ketbra{v}{v}+s\sigma}{1+s}\right) \leq k \right\} \\
        & \leq \min\left\{s\geq 0\,:\, SN\left(\frac{\ketbra{v}{v}+s\sigma^*}{1+s}\right) \leq k \right\} \\
        & \leq \min\left\{s\geq 0\,:\, \frac{\ketbra{\lambda}{\lambda}+s\delta^*}{1+s} \in \cI_k \right\} \\
        & = R_k^s(\ketbra{\lambda}{\lambda}),
    \end{align*}
    where the final inequality comes from the fact that $\frac{\ketbra{\lambda}{\lambda}+s\delta^*}{1+s} \in \cI_k$ implies $SN\left(\frac{\ketbra{v}{v}+s\sigma^*}{1+s}\right) \leq k$.
\end{proof}

\section{VI. Conclusions and Discussion}

In this paper, we derived a formula for the standard robustnesses of $k$-coherence and $k$-entanglement on pure states that agrees with known formulas for the corresponding generalized robustnesses, thus resolving conjectures about both of these families of measures and providing computable expressions for them. As our proof was non-constructive in nature, we also presented a computational method based on linear programming that allows us to quickly compute the closest $k$-incoherent state or closest Schmidt number $k$ state.

\section{Acknowledgements}
N.J.\ was supported by NSERC Discovery Grant number RGPIN-2016-04003. C.-K.L.\ is an affiliate member of the Institute for Quantum Computing, University of Waterloo. He is an honorary professor of Shanghai University. His research was supported by USA NSF grant DMS 1331021, Simons Foundation Grant 351047, and NNSF of China Grant 11571220. S.P.\ was supported by NSERC Discovery Grant number 1174582, the Canada Foundation for Innovation (CFI) grant number 35711, and the Canada Research Chairs (CRC) Program grant number 231250. B.R.\ was supported by the European Research Council (ERC) under the Starting Grant GQCOP (Grant No.~637352). Part of the research was done when C.-K.L.\ and Y.-T.P.\ were visiting the Institute for Quantum Computing in the fall of 2017. They gratefully acknowledge the generous support of the institute. 

\appendix


\begin{thebibliography}{50}

\bibitem{Glau63} R.~J.~Glauber, Phys.\ Rev.\ \textbf{131}, 2766 (1963).
        
\bibitem{Su63} E.C.G.~Sudarshan, Phys.\ Rev.\ Lett.\ \textbf{10}, 277 (1963).

\bibitem{Abe06} J.\  Aberg 
(2006), preprint available at: \url{http://arxiv.org/abs/quant-ph/0612146}
    
\bibitem{BCP14} T.\ Baumgratz, M.\ Cramer, and M.\ B.\ Plenio, Phys.\ Rev.\ Lett.\ \textbf{113}, 140401 (2014).

\bibitem{SV} J.~Sperling and W.~Vogel, Phys.\ Scr.\ \textbf{90}, 074024 (2015). 

\bibitem{RPCBSA} B.~Regula, M.\ Piani, M.\ Cianciaruso, T.~R.\ Bromley, A.\ Streltsov,   and G.\ Adesso,  New J.~Phys.\ \textbf{20}(3), 033012 (2018).  

\bibitem{KSP}N.~Killoran, F.~E.~S.~Steinhoff,  and M.~B.~Plenio, Phys.\ Rev.\ Lett.\
\textbf{116}, 08040 (2016).

\bibitem{WY} A.~Winter and D.~Yang,     Phys.\ Rev.\ Lett.\ \textbf{116}, 120404 (2016).

\bibitem{TKEP} T.~Theurer, N.~Killoran, D.~Egloff, M.B.~Plenio, Phys.\ Rev.\ Lett.\ \textbf{119}, 230401 (2017).

\bibitem{Chin1}
S.\ Chin,  Phys.\ Rev.\ A \textbf{96}, 042336 (2017). 
  
\bibitem{Chin2}S.\ Chin, J.\ Phys.\ A: Math.\ Theor.\ \textbf{50}, 475302 (2017). 

\bibitem{NBCPJA16} C.~Napoli, T.~R.~Bromley, M.~Cianciaruso, M.~Piani, N.~Johnston, and G.~Adesso.  Phys.\ Rev.\ Lett.\ \textbf{116}, 150502 (2016).

\bibitem{Reg17} B.~Regula,   J.\ Phys.\ A: Math.\ Theor.\ \textbf{51}, 045303 (2017). 


\bibitem{RBCLAWFP17} M.~Ringbauer, T.~R.~Bromley, M.~Cianciaruso, S.~Lau, G.~Adesso, A.~G.~White, A.~Fedrizzi, and M.~Piani. \emph{Certification and Quantification of Multilevel Quantum Coherence} (2017), preprint available at: \url{https://arxiv.org/abs/1707.05282}


\bibitem{Rud05} O.~Rudolph, 
Quantum Inf.\ Process., \textbf{4}(3):219--239 (2005).

\bibitem{Rud01} O.~Rudolph, 
J.\ Math.\ Phys., \textbf{42}:5306--5314 (2001).

\bibitem{VT99} G.\ Vidal and R.\ Tarrach, 
Phys.\ Rev.\ A, \textbf{59}:141--155 (1999).

\bibitem{Nconj} N.\ Johnston and D.W.\ Kribs, 
Houston J.\ Math., \textbf{41}(3):831--847 (2015).

\bibitem{CGJLP} J.~Chen, S.~Grogan, N.~Johnston, C.-K.~Li, and S.~Plosker, Phys.\ Rev.\ A \textbf{94}, 042313 (2016).

\bibitem {JLP} N.~Johnston,  C.-K.\ Li,  and S.~Plosker. \emph{The modified trace distance of coherence is constant on most pure states}, J.\ Phys.\ A: Math.\ Theor., to appear. 

\bibitem{Clarisse} L.\ Clarisse, J.\ Phys.\ A: Math.\ Gen.\ \textbf{39}, 4239 (2006).

\bibitem{Ksupport} A.\ Argyriou, R.\ Foygel, and N.~Srebro, Advances in Neural Information Processing Systems, 1457 (2012).

\bibitem{CVX} M. Grant and S. Boyd, \emph{CVX: MATLAB software for disciplined convex programming, version 2.0 beta}, \url{http://cvxr.com/cvx}, 2012.

\bibitem{MATLABcode} N.~Johnston, \emph{Robustness of k-coherence MATLAB code}. Available online at: \url{https://zenodo.org/record/1246805} DOI: 10.5281/zenodo.1246805

\bibitem{TH00} B.~M.~Terhal and P.~Horodecki, 
Phys.\ Rev.\ A, \textbf{61}, 040301(R) (2000).

\bibitem{Rud00} O.~Rudolph, 
J.\ Phys.\ A: Math.\ Gen., \textbf{33}:3951--3955 (2000).




\end{thebibliography}
\end{document}